\newcommand{\ie}{{\em i.e.,}\ }
\newcommand{\eg}{{\em e.g.,}\ }
\newcommand{\cP}	{{\mathcal P}}
\newcommand\blfootnote[1]{%
  \begingroup
  \renewcommand\thefootnote{}\footnote{#1}%
  \addtocounter{footnote}{-1}%
  \endgroup
}
\newtheorem{theorem}{Theorem}
\newtheorem{definition}{Definition}
\newtheorem{property}{Property}
\theoremstyle{remark}
\title{A Paradigm For Collaborative Pervasive Fog Computing Ecosystems at the Network Edge}
\author{Abderrahmen Mtibaa\\
 Computer Science Department\\
University Of Missouri--St. Louis\\
amtibaa@umsl.edu}
\begin{document}
\maketitle

\begin{abstract}
\blfootnote{\color{red}This paper has been accepted by the IEEE International Conference on Communication, Networks and Satellite (COMNETSAT) 2023. The definite version of this work
will be published by IEEE as part of the COMNETSAT conference proceedings.}While the success of edge and fog computing increased with the proliferation of the Internet of Things (IoT) solutions, such novel computing paradigm, that moves compute resources closer to the source of data and services, must address many challenges such as reducing communication overhead to/from datacenters, the latency to compute and receive results, as well as energy consumption  at the mobile and IoT devices. 
fog-to-fog (f2f) cooperation has recently been proposed to increase the computation capacity at the network edge through cooperation across multiple stakeholders. 
In this paper we adopt an analytical approach to studying f2f cooperation paradigm. We highlight the benefits of using such new paradigm in comparison with traditional three-tier fog computing paradigms. We use a Continuous Time Markov Chain (CTMC) model for the $N$ f2f cooperating nodes and cast cooperation as an optimization problem, which we solve using the proposed model. 

\noindent\bf Fog computing; cooperation; multi-tenant; fog-to-fog
\end{abstract}

\section{Introduction}
\label{sec.intro}


The proliferation of hardware and software technology advancements has pushed services and computations towards the network edge in order to reduce energy consumption, delay, and core network overhead~\cite{mtibaa13cloudcom, cloudlet09}. The fog computing paradigm brings together storage, communication, and computation resources closer to users' end-devices. Therefore, fog nodes are deployed at the edge of the network, offering low latency access to users. While most deployed systems adopt a three-tier architecture, consisting of always probe an edge node before sending any task to the distant cloud, recently researchers have proposed cooperative fog layers that allow fog-to-fog cooperation and reduce the probability to probe the distant cloud. Fog nodes are expected to be deployed both hierarchically, and horizontally. Nodes in the same layer can cooperate one with each other to reduce the communication latency associated to reach higher levels.

\noindent{\bf Why To Cooperate?}
Resources at the edge will increase in the upcoming years while demand may also increase at a higher rate which makes resource provisioning a very challenging task. Cooperation and sharing resources offer solutions to such issue. Two fog nodes cooperate by sharing their resources to satisfy their clients in case of local and transient overload. Such cooperation can be incentivized via a credit-based~\cite{Bera1812:CICO} or a tit-for-tat mechanism which allows sharing resources with the expectation to use remote resources back within a ``short term''. However, leveraging such sparsely distributed muli-tenant, multi-stakeholder computing resources across the edge network is a very challenging task.

\begin{figure}[!h]
\centering
\includegraphics[width=0.85\linewidth]{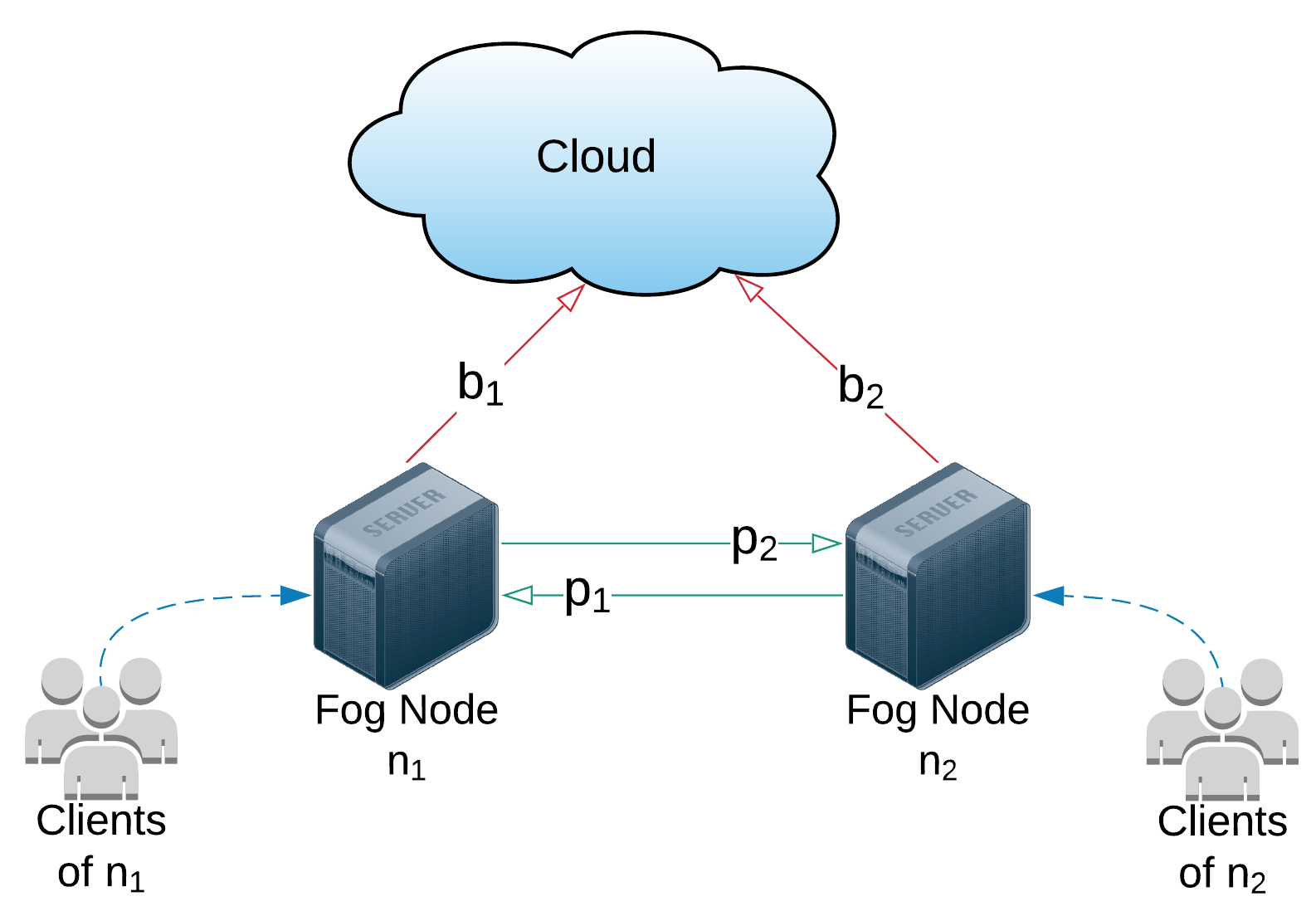}
\caption{Cooperation among two fog nodes receiving tasks from their corresponding clients; fog node $n_1$ accepts tasks from neighboring nodes with a cooperation probability $p_1$ and sends tasks to the cloud with a blocking probability $b_1$.}
\label{fig:scenario}
\vspace{-0.4cm}
\end{figure}

In this paper, we consider the following scenario of $N$ fog nodes belonging to different service providers interconnected in a mesh setting. Each fog node is programmed to compute a task if its resources are idle, otherwise offload the task to its cloud resource. This paper proposes provisioning resources from neighboring fog nodes and enabling a fog-to-fog (f2f) cooperative scheme that aims at reducing the task execution delay and the overhead at the core of the network. Fig.~\ref{fig:scenario} illustrates a cooperation example between two fog nodes $n_1$ and $n_2$. These fog nodes are directly connected to their corresponding clients which send tasks for execution. 

We define the {\em cooperation probability} $p_1$ as the probability that node $n_1$ will accept remote tasks from neighboring nodes (\eg tasks received by $n_2$ from its corresponding clients) to run. We assume that fog nodes receiving tasks from their corresponding clients will always probes other neighboring fog nodes if they cannot execute tasks locally before forwarding such tasks to the cloud. We define the blocking probability, $b_1$ at node $n_1$, as the ratio of tasks unable to be executed at the fog layer (local fog nodes and neighboring nodes) and must be forwarded to the distant cloud. The goal is to reduce such blocking probability such as we reduce the task execution delay, the overhead at the core of the network, and energy consumption of the IoT clients~\cite{cloudlet09, MAUI}. The above scenario can be expanded to $N$ fog nodes, where fog node $n_j$ cooperates with probability $p_j$ and probe another node uniformly at random.

\noindent\textbf{When To Cooperate?} 
We propose an f2f cooperation scheme under fair load distribution constraint across different fog nodes/providers. We define an optimization problem that ensures a positive gain of cooperation at all fog nodes while maintaining remote task acceptance fairness as follows: 
\begin{eqnarray}
\textbf{Problem $\cP$:} & \nonumber  \\
\underset{\mathbf{p}} {\text{Minimize:}}
 &  B(\mathbf{p}) = \left[ b_1(\mathbf{p}),\ldots,b_N(\mathbf{p}) \right ]
 \label{prolem:min}
 \\
\text{Subject to:} \nonumber \\
&  b_i(\mathbf{p}) \le b^0_i \label{problem:convenience}\\
&   R_{i}^{in} = R_{i}^{out} \label{problem:fair}\\
& 0 \le p_i \le 1 \forall i = 1,..,N \label{problem:feasable}
\end{eqnarray}
where $\mathbf{p}=(p_1,\ldots,p_N)$ is a vector of cooperation probabilities ( $p_i$ being the probability that node $i$ cooperates), $b_i(\mathbf{p})$ is the blocking probability at node $i$ when nodes cooperate with the probabilities in $\mathbf{p}$, $b^0_i$ is the blocking probability at node $i$ when nodes do not cooperate, $R_i^{in}$ and $R_i^{out}$ are the the average number of accepted tasks to run at node $i$ and the number of tasks sent for remote execution at another fog node $j\neq i$ respectively. 

Our main contributions include: 
\begin{enumerate}
    \item Making the case for compute cooperation across different fog nodes/providers. We propose optimization problem that ensures a positive gain of cooperation at all fog nodes. We highlight the potential gains of such cooperative edge computing ecosystem.
    \item We propose a simple yet general Continuous Time Markov Chain to model $N$ fog-to-fog (f2f) competitive fog providers.
    \item We solve Problem $\cP$ using the proposed model, in a closed form for $N=2$ and numerically for $N>2$
\end{enumerate}

The remainder of the paper is divided as follows. We present the related work in Section~\ref{sec.rw}. Section~\ref{sec.markov} discussed our f2f cooperation models for $N=2$ and $N>2$ nodes and provides numerical analysis of these proposed models. 
We conclude and discuss limitations and future work in Section~\ref{sec.conc}.

\section{Related Work}
\label{sec.rw}

In the literature, fog-to-fog (f2f) cooperation is recently getting attention of multiple stakeholders to improve cost-performance trade-off. However, cooperation among fog nodes were largely limited to cooperation within the same provider/stakeholder~\cite{Xiao2017Infocom, Masri2017MinimizingDI, Kapsalis2017ACF}.
Collaborative offloading schemes of unprocessed workload are proposed to reduce end-to-end latency~\cite{Masri2017MinimizingDI} and the Quality of Experience (QoE) of edge computing registered users~\cite{Xiao2017Infocom}. Within the context of the same providers, some have proposed schemes requiring periodic exchange of control messages to a central node, or a broadcast to all nodes to make efficient compute and scheduling decisions~\cite{Kapsalis2017ACF}. 

While most of researchers have investigated cooperation from within the same provider or stakeholder, few have been interested in cooperation across a federation of edge networks~\cite{Kapsalis2017ACF, Cosimo2018,  Bera1812:CICO}. In \cite{Kapsalis2017ACF}, authors propose a scheme that characterizes tasks according to their computational nature and subsequently allocates them to the appropriate hosts in the federation via a brokering Publish/Subscribe asynchronous communication system. Another study introduced Fog Infrastructure Providers (FIPs) to mutually sharing workloads and resources. Authors show that the coalition among cooperative FIPs improve their net profit. Incentive mechanisms for collaboration across providers has been proposed and discussed in~\cite{Bera1812:CICO}. However, most of these proposed frameworks require exchange of rates across fog nodes to efficiently allocate sub-intervals of task executions. Differently from other  works, we propose a novel model that does not rely on load prediction and efficiently tune the cooperation among the nodes. 

The work in \cite{OurTCC} reports a study on the cooperation among different fog nodes with the purpose of load balancing. Cooperation probabilities are there used to obtain fairness among nodes. The model adopted in the work is valid for $N \rightarrow \infty$ and the purpose of cooperation is different from ours. Authors empirically found the optimal cooperating probabilities, but without framing the problem into an optimization framework. 

Finally, researchers have studied cooperation among cloud providers~\cite{TruongHuu2014ANM}, however proposed solutions often require centralized cloud controller and cannot operate in a distributed and decentralized fashion.

\section{Fog-to-Fog Cooperation}
\label{sec.markov}


\subsection{Motivation}
Cooperation across different providers may be counter-intuitive as providers do not want to help competitors achieve good performances. We will show that this selfish strategy results in a lose-lose case. By setting their cooperation probability to minimum, nodes will fast detect unfair load sharing and cancel cooperation. However, if nodes cooperate fairly following the model $\cP$, their blocking probability will decrease, which results in an increase of the QoE of the corresponding clients.

We define the overloaded state in its simplest form, namely a node $i$ is {\em overloaded} at time $t$ when $i$ is busy executing a task at time $t$. A node is {\em idle} if it is not overloaded. But other definitions are possible as well and may apply follow similar models  as the one proposed in this paper. For example, a node implements a queue of tasks waiting for being served, and the overloaded state may correspond to a queue length higher than a given threshold.



\subsection{Cooperative Model for N = 2 Nodes}
We start by considering a model of two cooperating nodes. We assume that the communication delay among fog nodes is negligible compared to fog-to-cloud (f2c) communication delay~\cite{cloudlet09}, tasks arrival is a Poisson process with mean $\lambda$ and execution time is exponentially distributed with unitary mean. 

f2f cooperation is regulated via a {\it cooperation} probability: when $n_1$ is overloaded and receives a new task, it asks $n_2$ to use its resources for remote task execution. If $n_2$ is idle it accepts to share its resources with $n_1$ with probability $p_2$. Such cooperation reduces the blocking probability at the fog nodes and thus sending tasks to the distant cloud.

The Continuous Time Markov Chain (CTMC) shown in Fig.~\ref{fig:MC} describes the dynamic of the two cooperating nodes. The state of the chain is a pair $(s_1,s_2) \in {0,1} \times {0,1}$, where $s_i$ represents the state of node $i$, \ie $s_i=0$ (or $s_i=1$) denotes that the resource of node $n_i$ is idle (or overloaded). Node $n_i$ changes its state from $s_i=0$ to $s_i=1$ if it receives one task from its own clients (tasks are received with rate $\lambda_i$), or from the neighboring fog node $n_j$ which has received a task from its clients while overloaded, \ie $s_j=1$ and $n_i$ accepts to cooperate (this occurs with rate $p_i\lambda_j)$. The dead rates of the chain are $\mu=1$. 
The blocking probability can be expressed as a function of the steady state probabilities of the chain as follows:
$$
b_1 = \pi_{11}+\pi_{10}(1-p_2) \:\:\:\:\: b_2 = \pi_{11}+\pi_{01}(1-p_1), 
$$
where the first term represents the probability that when a task arrives, the two nodes are overloaded, while the second term represents the probability that the node receiving the task is overloaded while the neighboring node is idle but does not accept to cooperate. 

\begin{figure}[tb]
\centering
\includegraphics[width=0.55\linewidth]{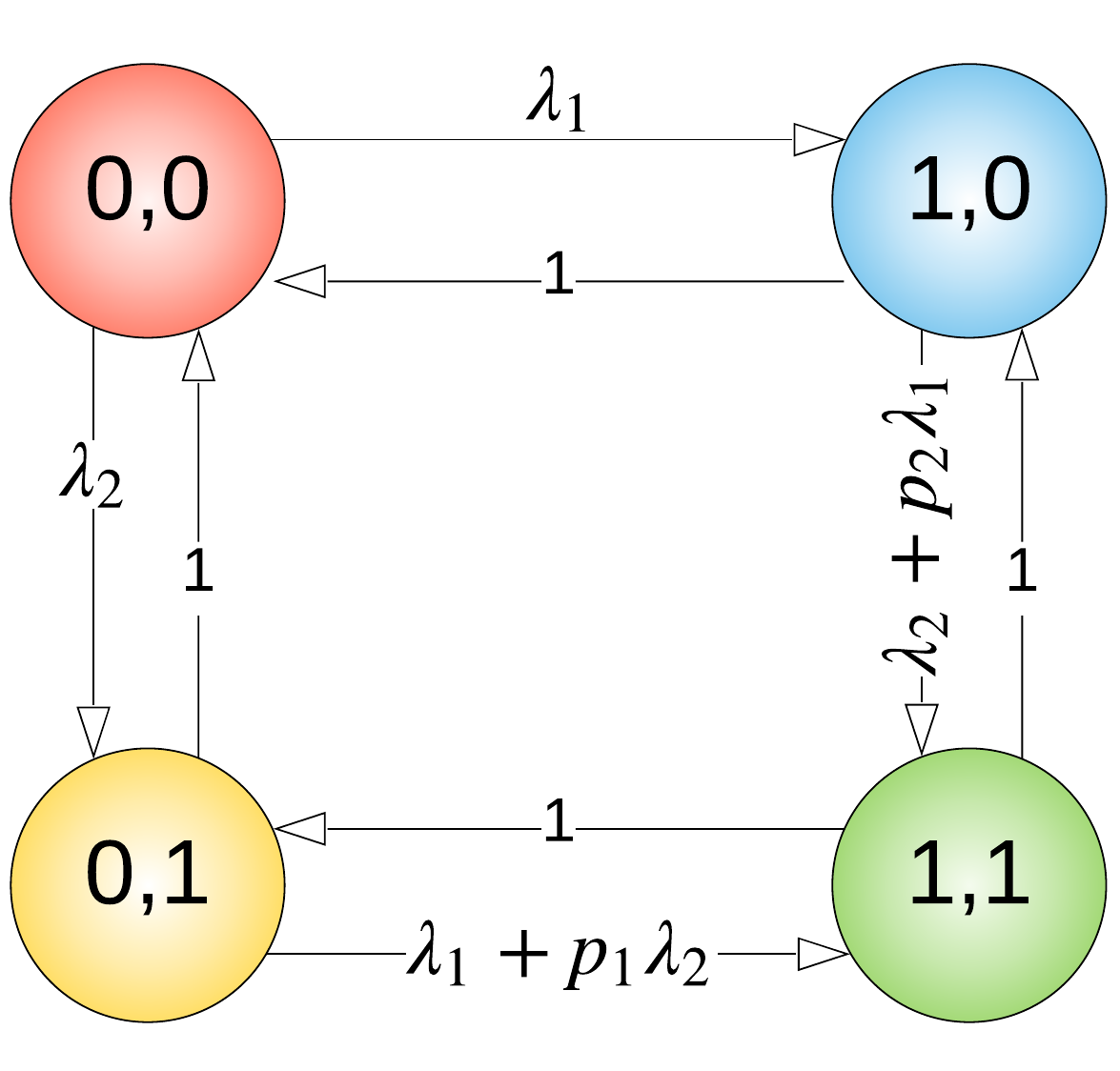}
\caption{The state transition diagram of the CTMC used to model a two node cooperating network.}
\label{fig:MC} 
\vspace{-0.4cm}
\end{figure}

The steady state probabilities can be obtained solving the CTMC. In particular, as the infinitesimal generation matrix of the CTMC is given by:

\scriptsize
$$
\mathbf{Q} = 
 \begin{bmatrix}
  -\lambda_1-\lambda_2 & \lambda_1 & \lambda_2 & 0 \\
  1 & -1-p_2\lambda_1 - \lambda_2 & 0 & \lambda_2 + p_2\lambda_1  \\
  1 & 0 & -1-p_1\lambda_2  - \lambda_1  & \lambda_1 + p_1 \lambda_2  \\
  0 & 1 & 1 & -2 
 \end{bmatrix}
$$
\normalsize 

The steady state probabilities $\pi$ are the solution of a standard system of linear equations $\mathbf{A}\pi=\mathbf{b}$, where $\mathbf{A}=\mathbf{Q}^T$, but the last row equal to all ones and $\mathbf{b}^T=[0,0,0,1]$, where the unitary entry takes the probability normalization condition into account. The Cramer's rule applied to the system allows to write:
\begin{equation}
\label{equa:sol_pi}
\pi_i = \frac{det(A_i)}{det(A)}    
\end{equation}
where $i$ is expressed in binary digits and column/row indexes start from zero. After some algebraic manipulation, the blocking probability can be conveniently expressed as a function of $p_1,p_2$:
\begin{equation}
b_1(p_1,p_2)= \frac{\kappa_1+\alpha_1p_1+\beta_1 p_2-\gamma_1 p_1p_2}{\kappa +\alpha p_1+\beta p_2+\gamma p_1p_2},
\label{formula:b1}
\end{equation}

where the coefficients are defined as:
\begin{equation}
 \left.\begin{aligned}
        \kappa &= 2+ 3\lambda_1+3\lambda_2 +4\lambda_1\lambda_2+ \lambda^2_1+\lambda^2_2+\lambda^2_1\lambda_2+\lambda^2_2\lambda_1 \\
\alpha & =\lambda_2+\lambda_1\lambda_2+2\lambda^2_2+\lambda^3_2+\lambda^2_2\lambda_1 \\
\beta & =\lambda_1+\lambda_2\lambda_1+2\lambda^2_1+\lambda^3_1+\lambda^2_1\lambda_2 \\
\gamma & 
=\lambda^2_1\lambda_2+\lambda^2_2\lambda_1\\
\kappa_1 & =2\lambda_1+3\lambda_1\lambda_2+\lambda^2_1+\lambda^2_1\lambda_2+\lambda^2_2\lambda_1 \\        \alpha_1 & =2\lambda^2_2+\lambda_1\lambda_2+\lambda_1\lambda^3_2+\lambda^3_2 \\
\beta_1 & =\lambda_2+\lambda^2_1 \lambda_2+\lambda^3_1-2\lambda_1-\lambda_1\lambda_2\\
 \gamma_1 & = \lambda_1\lambda_2+\lambda^2_2 - \lambda^2_1\lambda_2-\lambda^2_2\lambda_1        \end{aligned}
 \right.
\end{equation}
Note that all the coefficients are always positive.  
Due to the symmetry of the model, we can express $b_2$ as follows: 
$$
b_2(p_1,p_2)=b_1(p_2,p_1)
$$
where the coefficients in $b_1$ have now $\lambda_1$ and $\lambda_2$ swapped.


\subsection{Closed form solution of $\cP$ for $N=2$ Nodes}
To better illustrate the properties of the cooperation problem, we present two basic definitions concerning the minimization of $N$
multivariate functions $B(\mathbf{p})=[b_1(\mathbf{p}),..,b_N(\mathbf{p})]$ with decision variables $\mathbf{p}$, whose solution belongs to the so-called efficient set.
\begin{definition}[Pareto improvement]
A vector $\delta \mathbf{p}=(\delta p_1, \ldots \delta p_N)$ is a Pareto over $\mathbf{p}$ if
$$
b_i(\mathbf{p}+ \delta \mathbf{p}) \le b_i(\mathbf{p}) \:\:\:\: \forall i \in 1,\ldots,N
$$
$$
\exists j \in 1,\ldots N \:\:\:\: b_j(\mathbf{p}+ \delta \mathbf{p})<b_j(\mathbf{p})
$$
\end{definition}

\begin{definition} [Efficiency]
A cooperation vector $\mathbf{p}$ is efficient if it does not exit any Pareto improvement for that vector. $\partial P$ denotes the set of the efficient cooperation vectors. 
\end{definition}
Let now focus on the case $N=2$ and use solution of the MC of Fig.~\ref{fig:MC} for the closed form expression of $b_1,b_2$. We first consider a relaxed problem without the constrains (\ref{problem:convenience}),(\ref{problem:fair}). Constrains will be added later. We have the following property.


\begin{property}
\label{property1}
The efficient solution set of the function $B=[b_1,b_2]$, where $b_i$ are the solution of the $MC$ model of Fig.~\ref{fig:MC} and $p_i \in [0,1]$, is:
$$
\partial P = \{(1,*)\}\cup \{(*,1)\}
$$
\end{property} 

\begin{proof}
We will show that a Pareto improvement $ (\delta p_1, \delta p_2)$ exits for any cooperation probability pair, $(p_1,p_2)$, and it is such that $\delta p_1,\delta p_2>0$, \ie both cooperation probabilities must increase. 
Hence, a given cooperation pair can be Pareto efficient if and only if at least one component cannot be increased. 
As $b_i$ is a multivariate function, the variation $\delta p_1,\delta p_2$ that lets the function decreases is such that $\nabla b_i \cdot (\delta p_1,\delta p_2) < 0$. The blocking probability $b_1$ can be rewritten as:
\begin{eqnarray*}
b_1(p_1,p_2)&=&\frac{(\kappa_1+\beta_1p_2)+(\alpha_1+\gamma_1p_2)p_1}{(\kappa+\beta p_2)+(\alpha+\gamma p_2)p_1}
\\
&=&\frac{(\kappa_1+\alpha_1p_1)+(\beta_1+\gamma_1p_1)p_2}{(\kappa+\alpha p_1)+(\beta+\gamma p_1)p_2}
\end{eqnarray*}
Applying the derivative rule, we get: 
\begin{equation*}
\label{DERIVATA}
\frac{d}{d x} \Big ( \frac{A+Bx}{C+Dx}\Big )= \frac{BC-AD}{(C+Dx)^2}
\end{equation*}
We have:
\scriptsize
\begin{eqnarray*}
\frac{\partial b_1}{\partial p_1} &=& \frac{(\alpha_1+\gamma_1p_2)(\kappa+\beta p_2)-(\alpha+\gamma p_2)(\kappa_1+\beta_1p_2)}{(\kappa +\alpha p_1+\beta p_2+\gamma p_1p_2)^2} >0
\end{eqnarray*}
\normalsize
In fact the denominator is positive and after some manipulation the numerator can be written as:
$       \lambda_2 (2 \lambda_2 + ...) 
  (2  +... - p_2 - \lambda_2 p_2)>0
$ which is also positive. Similarly
\scriptsize
$$
\frac{\partial b_1}{\partial p_2} = \frac{(\beta_1+\gamma_1p_1)(\kappa+\alpha p_1)-(\beta+\gamma p_1)(\kappa_1+\alpha_1p_1)}{(\kappa +\alpha p_1+\beta p_2+\gamma p_1p_2)^2} < 0
$$
\normalsize

The numerator can be rewritten as:
$$
- (2 \lambda_1 + \lambda_1^2 + ..)(2 + 3 \lambda_1 +..)<0
$$
Thus, by swapping the indexes we also get:
$$
\frac{\partial b_2}{\partial p_2}>0, \frac{\partial b_2}{\partial p_1}<0
$$

Now, the two gradients $
\nabla b_1 = (\frac{\partial b_1 }{\partial p_1},\frac{\partial b_1 }{\partial p_2}),
\nabla b_2 = (\frac{\partial b_2 }{\partial p_1},\frac{\partial b_2 }{\partial p_2})
$
are not orthogonal and their angle is greater than $\pi$ since: 
$$
\nabla b_1 \cdot \nabla b_2 = \frac{\partial b_1 }{\partial p_1}  \frac{\partial b_2 }{\partial p_1} + \frac{\partial b_1 }{\partial p_2} \frac{\partial b_2 }{\partial p_2}<0
$$
Hence, a vector $\mathbf{\delta p}=(\delta p_1, \delta p_2)$, such that $\nabla b_1 \cdot \mathbf{\delta p} <0, \nabla b_2 \cdot \mathbf{\delta p} < 0$, i.e., that reduces both $b_i$ exits. And, as the angle between the gradients is higher than $ \pi$, $\delta p_1,\delta p_2 > 0$, see Fig.~\ref{fig:IMPROVE}.
\end{proof}

\begin{figure}[tbh] 
\centering
  \includegraphics[width=0.79\linewidth]{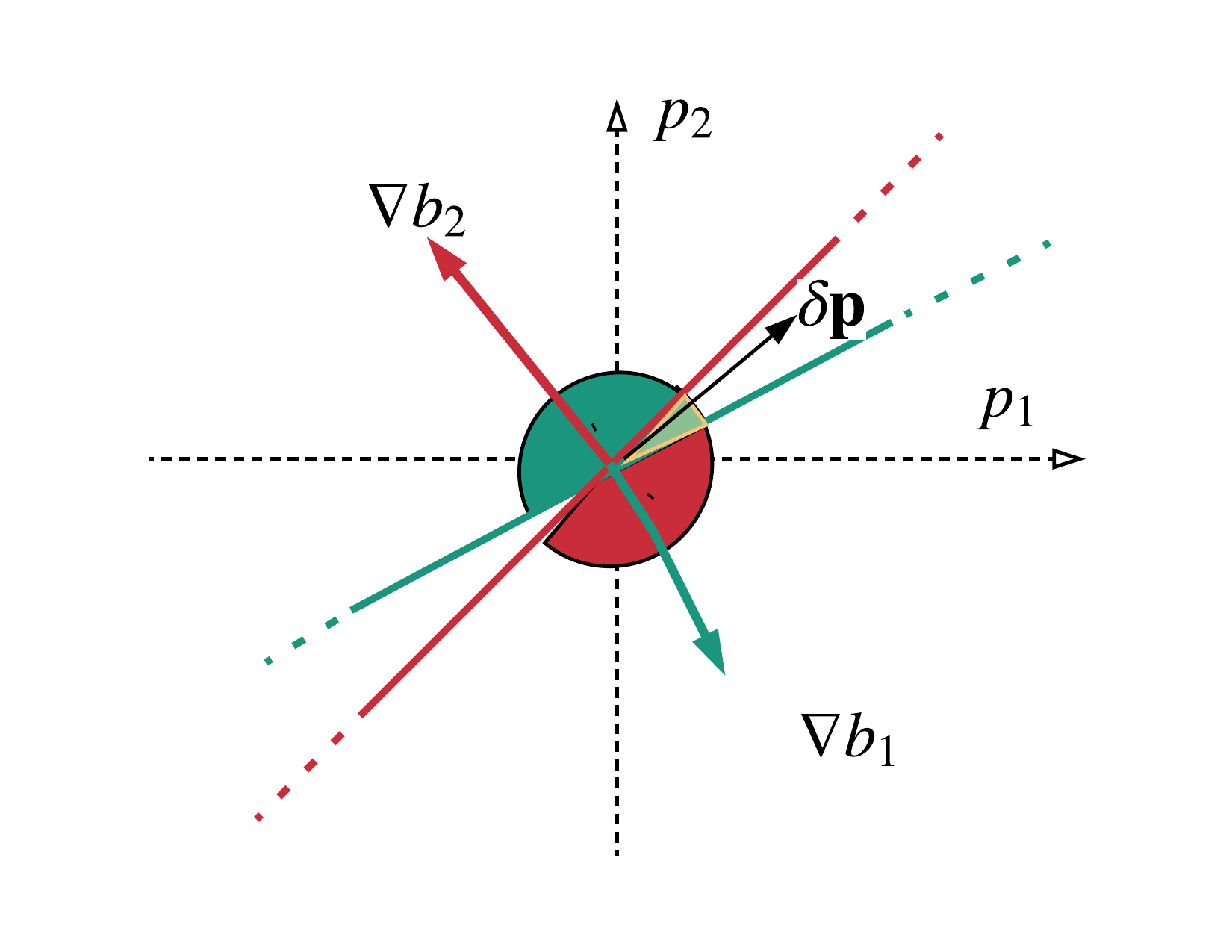}
   \caption{A sketch of a Pareto improvement; Segments are contour lines of $b_i$ and semicircles represent regions where the value of blocking probability decreases. Vectors in the intersection of the two semicircles, as $\delta \mathbf{p}$, is a Pareto improvement.}
    \label{fig:IMPROVE}  \vspace{-0.5cm}
\end{figure}



Let now find the expression of the constrain given in (\ref{problem:fair}) of $\cP$. 
Node $n_1$ executes a task from $n_2$ when: (i) an arriving task at $n_2$ finds $n_2$ congested and $n_1$ idle (this event occurs with probability $\pi_{01}$) and (ii) $n_1$ accepts to execute the task from $n_2$. The same is valid for $n_2$. Hence, the rate $a_1$ ($a_2$) at which node $n_1$ ($n_2$) executes tasks from $n_2$ ($n_1$) is: $
a_{1} = \pi_{01} p_1 \lambda_2 \:\:\: a_{2} = \pi_{10} p_2 \lambda_1
$, so that the cooperation ratio of node $i$.: 
$$
R^{in}_1 = a_1 \:\:\:  R^{out}_1 = a_2
$$
$$
R^{in}_2 = a_2 \:\:\:  R^{out}_2 = a_1
$$



\begin{theorem}[Solution of the Minimization Problem $\cP$]
The pair 
$$
p^*_1=1, p^*_2=\Big(\frac{\lambda_2}{\lambda_1}\Big)^2
$$
where $\lambda_1 \ge \lambda_2$, is the solution of $\cP$ under the $MC$ model of Fig.~\ref{fig:MC}.
\end{theorem}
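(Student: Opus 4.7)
The plan is to combine Property~\ref{property1} with the fairness constraint (\ref{problem:fair}) to pin down both components of the optimal cooperation vector. Property~\ref{property1} already tells us that any efficient solution lies in $\{(1,*)\}\cup\{(*,1)\}$, so either $p_1^{*}=1$ or $p_2^{*}=1$. I would next use the expressions $a_1=\pi_{01}p_1\lambda_2$ and $a_2=\pi_{10}p_2\lambda_1$ given above to rewrite the fairness condition $R_i^{in}=R_i^{out}$ as the single scalar equation $\pi_{01}\,p_1\lambda_2=\pi_{10}\,p_2\lambda_1$.

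I would then argue that the branch $p_2^{*}=1$ is infeasible under $\lambda_1\ge\lambda_2$. On that branch, fairness forces $p_1=\pi_{10}\lambda_1/(\pi_{01}\lambda_2)$, and exchanging the two node indices in the CTMC (together with the intuitive monotonicity of the marginal busy probability in the local arrival rate) should yield $\pi_{10}\lambda_1>\pi_{01}\lambda_2$, so $p_1>1$ and constraint (\ref{problem:feasable}) is violated. This selects $p_1^{*}=1$ as the only admissible branch of the efficient set.

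With $p_1=1$ fixed, the value of $p_2$ is determined by the fairness equation. The key algebraic step is to solve the CTMC at $p_1=1$ in closed form. I would subtract the global-balance equations for states $(1,0)$ and $(0,1)$, use the fairness substitution $\pi_{10}\,p_2\lambda_1=\pi_{01}\lambda_2$ to cancel the $\pi_{11}$ terms, and factor out $(\lambda_1-\lambda_2)$ to obtain the clean identities $\pi_{10}=\lambda_1\pi_{00}$ and $\pi_{01}=\lambda_2\pi_{00}$. Plugging these back into the fairness equation yields $p_2^{*}=\pi_{01}\lambda_2/(\pi_{10}\lambda_1)=(\lambda_2/\lambda_1)^2$, which lies in $[0,1]$ exactly when $\lambda_1\ge\lambda_2$. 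To close the proof, I would check (\ref{problem:convenience}) by substituting the closed-form stationary distribution into the expressions for $b_1,b_2$ established before (\ref{formula:b1}) and comparing with the no-cooperation baseline $b_i^{0}=\lambda_i/(1+\lambda_i)$ of an isolated $M/M/1/1$ queue.

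The main obstacle I anticipate is not the computation of $p_2^{*}$ once $p_1^{*}=1$ is settled — that is a short manipulation — but rather rigorously eliminating the $p_2^{*}=1$ branch, since a bare substitution only certifies infeasibility at a specific candidate whereas the argument must rule out the entire branch of the efficient set. I would first attempt a stochastic-monotonicity or coupling argument on the two-dimensional CTMC, and only fall back on brute-force manipulation of the determinantal formulas (\ref{equa:sol_pi}) if the coupling route does not close cleanly.
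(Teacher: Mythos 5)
Your route is correct in outline but genuinely different from the paper's. The paper proves the theorem purely by verification: it plugs $(1,(\lambda_2/\lambda_1)^2)$ into the chain, computes $\pi_{10}=\lambda_1/\det(A)$, $\pi_{01}=\lambda_2/\det(A)$, $\pi_{11}=(\lambda_1\lambda_2+\lambda_2^2)/\det(A)$ with $\det(A)=1+\lambda_1+\lambda_2+\lambda_1\lambda_2+\lambda_2^2$, checks $a_1=a_2$, checks (\ref{problem:convenience}) against $b_i^0=\lambda_i/(1+\lambda_i)$, and invokes Property~\ref{property1} only to certify efficiency; it never argues that this is the \emph{only} feasible efficient point. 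You instead derive the candidate from Property~\ref{property1} plus the fairness constraint, which buys uniqueness on top of what the paper establishes. Your central manipulation does close: subtracting the balance equations of $(1,0)$ and $(0,1)$ cancels $\pi_{11}$, the fairness substitution kills the cross terms, and combining the resulting identity $(1+\lambda_2)\pi_{10}-(1+\lambda_1)\pi_{01}=(\lambda_1-\lambda_2)\pi_{00}$ with the $(0,0)$ balance $\pi_{10}+\pi_{01}=(\lambda_1+\lambda_2)\pi_{00}$ (a nonsingular $2\times2$ system) yields $\pi_{10}=\lambda_1\pi_{00}$, $\pi_{01}=\lambda_2\pi_{00}$ and hence $p_2^*=(\lambda_2/\lambda_1)^2$. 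Two caveats. First, the obstacle you worry most about --- eliminating the $\{(*,1)\}$ branch --- needs no coupling or stochastic monotonicity: the identical subtraction with $p_2=1$ and fairness forces $p_1=(\lambda_1/\lambda_2)^2>1$ whenever $\lambda_1>\lambda_2$, directly violating (\ref{problem:feasable}); by contrast, your standalone claim that $\pi_{10}\lambda_1>\pi_{01}\lambda_2$ for \emph{every} $p_1\in[0,1]$ on that branch is not obviously true (with $p_2=1$ node $n_2$ absorbs all of $n_1$'s overflow, which inflates the states where $n_2$ is busy), so do not lean on it. Second, your derivation only gives necessity --- any fair efficient point with $p_1=1$ must have $p_2=(\lambda_2/\lambda_1)^2$ --- so you still owe the short existence check that this pair actually satisfies (\ref{problem:fair}), which is exactly the paper's computation of $\pi$ at the candidate; and you should note that the degenerate case $\lambda_1=\lambda_2$ collapses both branches to $(1,1)$.
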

\begin{proof} We must show that this pair: (i) is solution of the MC; (ii) satisfies the constrains of $\cP$; (iii) is Pareto efficient. (i) by plugging these values in (~\ref{equa:sol_pi}), we get $det(A)=1 + \lambda_1 + \lambda_2 + \lambda_1 \lambda_2 + \lambda_2^2 \neq 0$, and hence with this pairs the $MC$ can be solved; in particular, one gets:
$$
\pi_{10}= \frac{\lambda_1}{det(A)}\:\:\: \pi_{01} = \frac{\lambda_2}{det(A)} \:\:\: \pi_{11} = \frac{\lambda_1\lambda_2+\lambda^2_2}{det(A)} 
$$

(ii) from the assumptions, $p^*_2 \le 1$ and hence the pair satisfies (\ref{problem:feasable}) (it is a feasible solution); it also satisfies (\ref{problem:fair}). In fact,
$$
a_1 = \lambda_1 \times \Big (\frac{\lambda_2}{\lambda_1} \Big )^2 \times \frac{\lambda_1}{1 + \lambda_1 + \lambda_2 + \lambda_1 \lambda_2 + \lambda_2^2},
$$
$$
a_2 = \lambda_2 \times \frac{\lambda_2}{1 + \lambda_1 + \lambda_2 + \lambda_1 \lambda_2 + \lambda_2^2}.
$$
Therefore, 
$$
R^{in}_1 = a_1 = a_2 = R^{out}_1 
$$
$$
R^{in}_2 = a_2 = a_1 = R^{out}_2 
$$
Finally, as
$$
b_1(1,p^*_2) = \frac{\lambda_1 \lambda_2 + \lambda^2_2 + \lambda_1 - \frac{\lambda^2_2}{\lambda_1}}{1+\lambda_1+\lambda_2+\lambda_1 \lambda_2 + \lambda^2_2} <  \frac{\lambda_1}{1+\lambda_1} = b^0_1
$$
$$
b_2(1,p^*_2) = \frac{\lambda_1 \lambda_2 + \lambda^2_2}{1+\lambda_1+\lambda_2+\lambda_1 \lambda_2 + \lambda^2_2} <  \frac{\lambda_2}{1+\lambda_2} = b^0_2
$$
it satisfies Equation \ref{problem:convenience} ($b^0_i$ is the value of the Erlang-B formula with just $k=1$ server). (iii) the pair belongs to $\partial P$ and from Property 1 it is a Pareto efficient solution of the relaxed problem of $\cP$ where the above constrains are removed.
\end{proof}

\noindent{\bf Understanding cooperation}
In order to have a better understanding of the nature of the cooperation problem $\cP$, Fig.~\ref{fig:PARETO_09504NEW} shows the $p_1 \times p_2$ domain of the problem with $N=2$, where the $MC$ model with $\lambda_1=0.9,\lambda_2=0.8$ is used to compute $b_1,b_2$.

The solid line at the bottom is the contour line $b_1(p_1,p_2)=b^{0}_1$ and line at the top the one $b_2(p_1,p_2)=b^{0}_2$. The dashed line is the solution set of $\cP$. For a given $p_1$, any $p_2$ above the line at the bottom improves node $n_1$'s blocking probability compared to when $n_1,n_2$ do not cooperate, $b^{0}_1$; however, there is a value of $p_2$ after which $b_2(p_1,p_2)$ becomes higher than $b^{0}_2$, \ie $n_2$ has no benefit to cooperate. The region between the two lines is all the cooperating pairs for which both nodes gains wrt no cooperating for the relaxed problem of $\cP$ when the fair constrain Eq \ref{problem:fair} is removed and the dashed line when this constrain is forced. The two bold lines is the Pareto efficient set for a relaxed problem of $\cP$. Points $P1,P2$ are examples of possible cooperating points for the two nodes when fairness is not considered. $P1$ could roughly represent the cooperation points in a hypothetical agreement between the nodes, when node $n_1$ starts to cooperate first: if node $n_1$ sets $p_1=1$, node $n_2$ can set $p_2$ as indicated by $P1$ as both reduce their $b$. However, the benefit of $n_1$ (percentage reduction of $b_1$) is very low compared to the $n_2$'s one (for example, for $p_2=0.4$ $n_1$ gains just $4\%$, whereas $n_2$ $32\%$). Node $P2$ is a possible cooperation pair when $n_2$ starts to cooperate. The problem of this solution is that the node who started cooperating first is penalized, which may discourage starting to cooperate. Point $P3$ is the solution of $\cP$.

\begin{figure}[b] 
\vspace{-0.6cm}
\centering
  \includegraphics[width=0.79\linewidth]{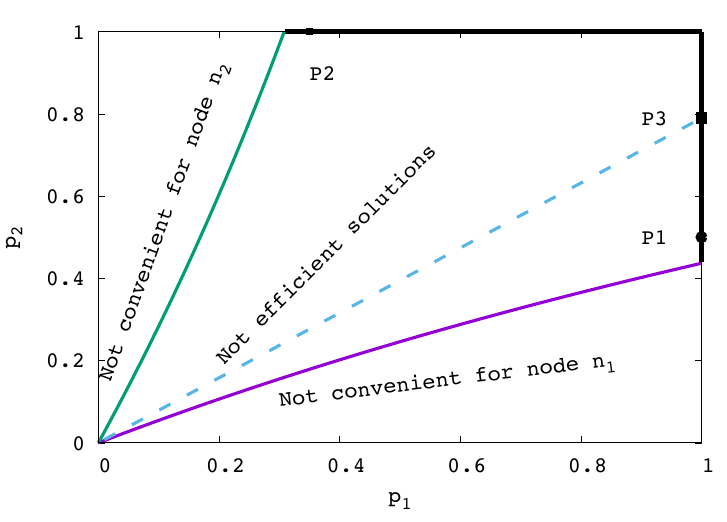}
   \caption{Partition of the cooperation probability domain for $\lambda_1=0.9,\lambda_2=0.8$. }
    \label{fig:PARETO_09504NEW} 
\end{figure}


\subsection{Cooperative Model for N > 2 Nodes}

Cooperation can be extended to $N>2$ nodes as following. When a node is overloaded, the node selects a given node $i$ uniformly at random among the $N-1$, and node $i$ accepts to share its server with probability $p_i$. If the node does not share its server, the original node will send the task to the cloud.
To model this system of cooperating nodes under Poisson arrivals with rates $\lambda_i,i=1,..,N$, we use a $N$-dimensional Markov Chain, $N$-MC. The state of this chain is $s \in S=[0,1]^N$, where $s_i=1$ denotes that the server at node $i$ is overloaded. Let $d(s,s')$ be the Hamming distance between $s$ and $s'$.  For any $s,s' \in S$, the transition rates are defined as following. 

If $d(s,s')=1$ and $s_i=1,s'_i=0$, then:
$$
q_{s,s'}  =1, 
$$
which corresponds to the events of a task leaving node $i$ (dead rates), whereas if $d(s,s')=1$ and $s_i=0,s'_j=1$
$$
q_{s,s'}= \lambda_i + \frac{p_i}{N-1} \sum_{j:s_j=s'_j=1}\lambda_{j}
$$
this rate is the sum of two terms corresponding to two events that make the server of node $i$ to become busy: (\textit{i}) a task arrives from end-users, this occurs with rate $\lambda_i$, or (\textit{ii}) a task arrives to a congested node $j$, the node selects $i$ to probe its availability and $i$ accepts. Finally:
$$
q_{s,s}=- \sum_{s' \ne s} q_{s,s'}.
$$
The above transition rates of this $N$-MC define the infinitesimal generation matrix of the chain of the N-$MC$, $\mathbf{Q}$. The blocking probabilities of the nodes are then computed from the steady state probabilities of $N$-MC as follows: 
$$
b_i= \frac{1}{N-1}\sum_{s:s_i=1,s_j=1} \pi_{s}  + \sum_{s:s_i=1,s_j=0} \pi_s (1-p_j).
$$
A task is in fact blocked if the node receiving the task is overloaded and the selected cooperating node is either overloaded, or it is idle but it does not cooperate. 

The rate at which node $j$ accepts tasks from $i$ is given by:
\begin{equation}
\label{equ:accept}
a_{ij}=\frac{p_j}{N-1} \lambda_i \sum_{s:s_i=1,s_j=0} \pi_{s}.
\end{equation}
In fact, for this event to occur, an arriving task to node $i$ must see node $i$ congested and $i$ idle (the probability of this event is the result of the summation), node $i$ has to select node $j$ (this occurs with probability $\frac{1}{N-1}$) and node $j$ must accept to execute the task ($p_j$). For $N>2$ the rates in Equation \ref{problem:fair} are given by
$$
R^{in}_i = \sum_{j} a_{ji} \:\:\:\: R^{out}_i = \sum_{j} a_{ij}.
$$

\subsection{Numerical solution of $\cP$ for $N>2$ Nodes}
\label{sec:OPT_NUM}
We present a numerical solution of $\cP$ when the blocking probabilities are computed using the $N$-MC model. In the following we assume that $\lambda_i \ge \lambda_{i+1}$. 

First of all, the constrains of (\ref{problem:fair}) are expressed as:
\begin{equation}
\label{equ:fair}
\sum_j a_{ij}-\sum_j a_{ji}=0 \:\:\: \forall i 1,..,N, 
\end{equation}
that can in turn be expressed in matrix form as:
\begin{equation}
\mathbf{F}  \mathbf{p} =0,
\label{equ:fair2}
\end{equation}

where, 
$$
f_{ij} = \lambda_i \sum_{s:s_i=1,s_j=0} \pi_{s}
$$
$$
f_{ii} = -\sum_{j \ne i} f_{ji}= -\sum_{j\ne i} \lambda_j \sum_{s:s_i=0,s_j=1} \pi_{s}. 
$$

Now, following the result of Theorem 1, we conjecture that the solution of the problem is a vector
$$
\mathbf{p}^*=(1,p^*_2,\ldots, p^*_N).
$$

Intuitively, if we take the node $n_1$ point of view, the node sees all the other nodes as a single cooperating entity with normalized load (total load divided number of servers) lower than $\lambda_1$, hence from the solution of $\cP$ for $N=2$, its cooperating probability is to be $p_1=1$.
If this conjecture is true, then $\mathbf{p}^*$ satisfies the set of equations: 
$$
\mathbf{Q} \pi = \mathbf{0}, \mathbf{F} \mathbf{p}^* = \mathbf{0},
$$
where $\mathbf{Q}$ is the N-$MC$ model's generation matrix. And,  to compute such probabilities a fixed point algorithm can be used.


The algorithm first computes the state probabilities for $\mathbf{p}=1$. This set of probabilities exits as in this case the system of nodes is equivalent to an $M/M/N/N$ queue with load $\lambda=\sum_i \lambda_i$. Then, the entries in the $\mathbf{F}$ matrix are computed. From here, a new vector $\mathbf{p}$  of cooperation probabilities is obtained by solving the following system:
$$
\mathbf{F}' \mathbf{p} = [1,\mathbf{0}]^T, 
$$
where $\mathbf{F}'$ is obtained from $\mathbf{F}$ by replacing the first row with $(1,0,\ldots,0)$. In other words, one equation of the system (~\ref{equ:fair2}) is replaced with the equation $p_1=1$. Note that, due to the requirement stated in (\ref{equ:fair}), $\mathbf{F}$ is singular as one row is a linear combination of the all the others.
We exploit the above equation to go back from the constrain (\ref{problem:fair}) to the cooperation vector that would satisfy the above constraints and that $p_1=1$. 
\subsubsection{Numerical evidence of the optimal solution}
Using the above algorithm we find the following numerical evidence of our conjecture on the form of the optimal solution. 

(a) If $\lambda_1$ is not the maximum load, and the remaining $\lambda_i$ are in arbitrary order, the algorithm converged to an unfeasible solution,  \ie $\exists j,  p_j>1$. If these probabilities are normalized to $max \{p_j\}$, the vector $\mathbf{p}^*$ is obtained (after rearranging the index of loads).

(b) If $p_1<1$, the algorithm converged 
a feasible vector $\mathbf{p'}=(p_1 , p^*_2,\ldots,p^*_N)$, which is not optimal. In particular, we have defined the distance from $\mathbf{p}^*$ as: 
$$
dist(p_1) = \sum_i |b_i(\mathbf{p'})-b_i(\mathbf{p}^*)|,
$$
and find that that for $p_1<1$, $dist(p1)>0$. This demonstrates that for any $p_1<1$ a Pareto exits. 







\section{Discussions \& Conclusion}
\label{sec.conc}

This paper has investigated fog-to-fog cooperation within a multi-providers multi-stakeholders scenario. We have proposed a Markov Chain based model and formulate the f2f cooperation as a minimization problem under fairness constraints and assuming uniform cooperative fog node selection. The problem has been solved in a closed form for $N=2$ and numerically for arbitrary $N$ fog nodes. This paper highlights the gain of fog-to-fog cooperation, however this study lacks the following aspects, which we will propose in our future contributions: privacy and security implications of f2f cooperation, the deployment of cooperation in a non homogeneous network, and the optimal fog node cooperating group when $N>2$. We discuss these challenges in the following and we present the future work plans in each of these aspects.

{\it Security \& Privacy}
In this paper, we have assumed a fully collaborative nodes and that all nodes will implement the proposed algorithms. A malicious node, however, can leverage such f2f cooperation scheme to push more tasks and accept fewer or no tasks to execute. This problem can be mitigated in our design as we check the collaboration ratio and the fairness compliance periodically. With respect to privacy concerns, neighboring fog nodes are able to monitor the tasks sent by other providers and their rate which may unveil sensitive business operations/loads. Similar privacy issues have solutions that consists of deploying a proxy that hash all values and make simple comparison of the rates~\cite{alanwar2017proloc}.

{\it Fog Node Coalition}
Our results outline how for $N>2$ providers, the optimal selection of the cooperative fog node may be not uniform and may depend on the loads of fog nodes. Our problem can be generalized to select the optimal list of nodes that can group a coalition set of cooperative nodes. This set can provide the maximum gain among the coalition nodes. The study of this interesting issue left as future work.

\section*{Acknowledgments}
This work was partially supported by US NSF award
\#2148358.

\tiny

\bibliographystyle{IEEEtran}
\bibliography{main}


\end{document}